\DeclareSymbolFont{tipa}{T3}{cmr}{m}{n}
\DeclareMathAccent{\invbreve}{\mathalpha}{tipa}{16}
\newtheorem{theorem}{Theorem}
\begin{document}

\date{}

\title{Randomized Key Encapsulation/Consolidation} 
\author{Amir K. Khandani \\ E\&CE Dept., Univ. of Waterloo, Waterloo, Ontario, Canada; khandani@uwaterloo.ca}

 \maketitle

\noindent
\underline{\bf Abstract:}This article bridges the gap between two topics used in sharing an encryption key: (i) Key Consolidation, i.e., extracting two identical strings of bits from two information sources with similarities (common randomness).  (ii) Quantum-safe Key Encapsulation by incorporating randomness in Public/Private Key pairs.  In the context of Key Consolidation, the proposed scheme adds to the complexity Eve faces in extracting useful data from leaked information. In this context, it is applied to the method proposed in~\cite{AK1} for establishing common randomness from round-trip travel times in a packet data network.   The proposed method allows adapting the secrecy level to the amount of similarity in common randomness. It can even encapsulate a Quantum-safe encryption key in the extreme case that no common randomness is available. In the latter case, it is shown that the proposed scheme offers improvements with respect to the McEliece cryptosystem which currently forms the foundation for Quantum safe key encapsulation. 

\section{Introduction}
Traditional  methods used in sharing an encryption key  rely on mathematical tools to construct a one-way function that is hard to invert. In particular, in Public Key Infrastructure (PKI), data is embedded in a (publicly sent) message using a public key, which is then extracted using its corresponding private key. Typically,  PKI is used to establish (encapsulate) an encryption key which is then used with Advanced Encryption System (AES). To safeguard key encapsulation against Quantum computers, methods considered for Quantum-safe encryption incorporate randomness in the public/private key pairs.  

Key consolidation, motivated by existence results in \cite{common}, concerns extracting a secret key from dependent random variables. This is of interest in areas such as Quantum Key Distribution (QKD) and Physical Layer Security (PLS). Method proposed in~\cite{B1} for key consolidation is widely used in the context of QKD, however, it requires extensive back-and-forth public communications between legitimate parties. Later works have studied the use of Turbo-codes~\cite{B2}\cite{B3} and Low Density Parity Check 
codes ~\cite{B4} to \cite{B7}  to improve upon~\cite{B1}. These earlier works suffer from:  (i) information leakage which is not yet rigorously quantified/studied, (ii) failing to function in the absence of  common randomness, or when the common randomness is of poor quality.  The current article aims to address these shortcomings. It is the first work to merge the two areas of ``quantum-safe (randomized) encryption" and ``key consolidation''. The proposed method is accompanied by information theoretical proofs. These proofs guarantee a target security level, called $\mathsf{SEC}$ (typically 256 bits), is realized, where the only possible attack is the one based on an exhaustive search over a set with $2^{\mathsf{SEC}}$ elements.

\section{Proposed Structure} 

\subsection{Construction and Security Level}
Alice generates the public key $\mathbf{P}=\mathbf{BC}$ as shown in Fig.~\ref{PK1}. Matrix   $\mathbf{C}_1$ is  a punctured random permutation matrix. Puncturing is performed by randomly selecting $\mathsf{p+q}$ columns, indexed by 
$i_p\in[1,\mathsf{s=m+p+q}]$ for $p=1,\ldots, \mathsf{p+q}$, 
setting column indexed by $i_p$ equal to zero, and then shifting rows indexed by 
$i_p+1,\ldots,\mathsf{s}$ to positions $i_p,\ldots,\mathsf{s}-1$, respectively, and discarding the last (repeated) rows. This results in matrix $\mathbf{C}_1$ of size $\mathsf{s}\times \mathsf{m}$ (see Fig.~\ref{PK1}). Matrix $\mathbf{C}$ is formed from $\mathbf{C}_1$ by appending a random matrix $\mathbf{C}_2$ of size $\mathsf{p}\times \mathsf{s}$. Matrix  $\mathbf{B}$ is constructed as depicted in Fig.~\ref{RC1}, 
and captured in expressions \ref{RC00} to \ref{RC003}. 

\begin{figure}[h]
   \centering
\hspace*{-0.8cm}
   \includegraphics[width=0.461\textwidth]{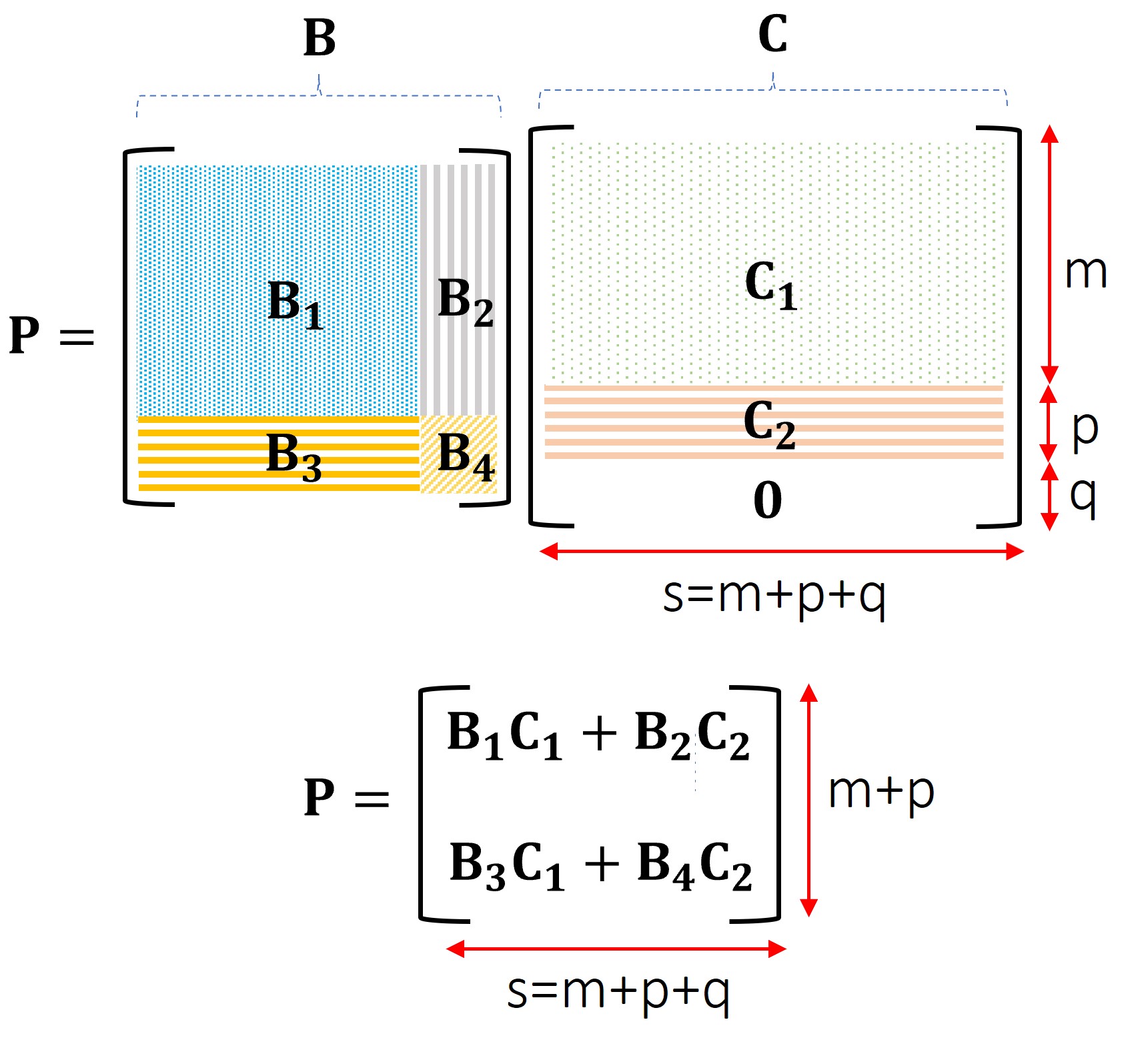}
   \caption{Structure of the public key: $\mathbf{C}_1$ is formed 
by puncturing $\mathsf{p+q}$  columns
 from an $\mathsf{s}\times \mathsf{s}$ random permutation matrix where $\mathsf{s}=\mathsf{m+p+q}$, and $\mathbf{C}_2$ is an 
$\mathsf{p}\times \mathsf{s}$ random matrix.  }
   \label{PK1}
 \end{figure}

%

 \begin{figure}[h]
   \centering
\hspace*{-0.8cm}
   \includegraphics[width=0.48\textwidth]{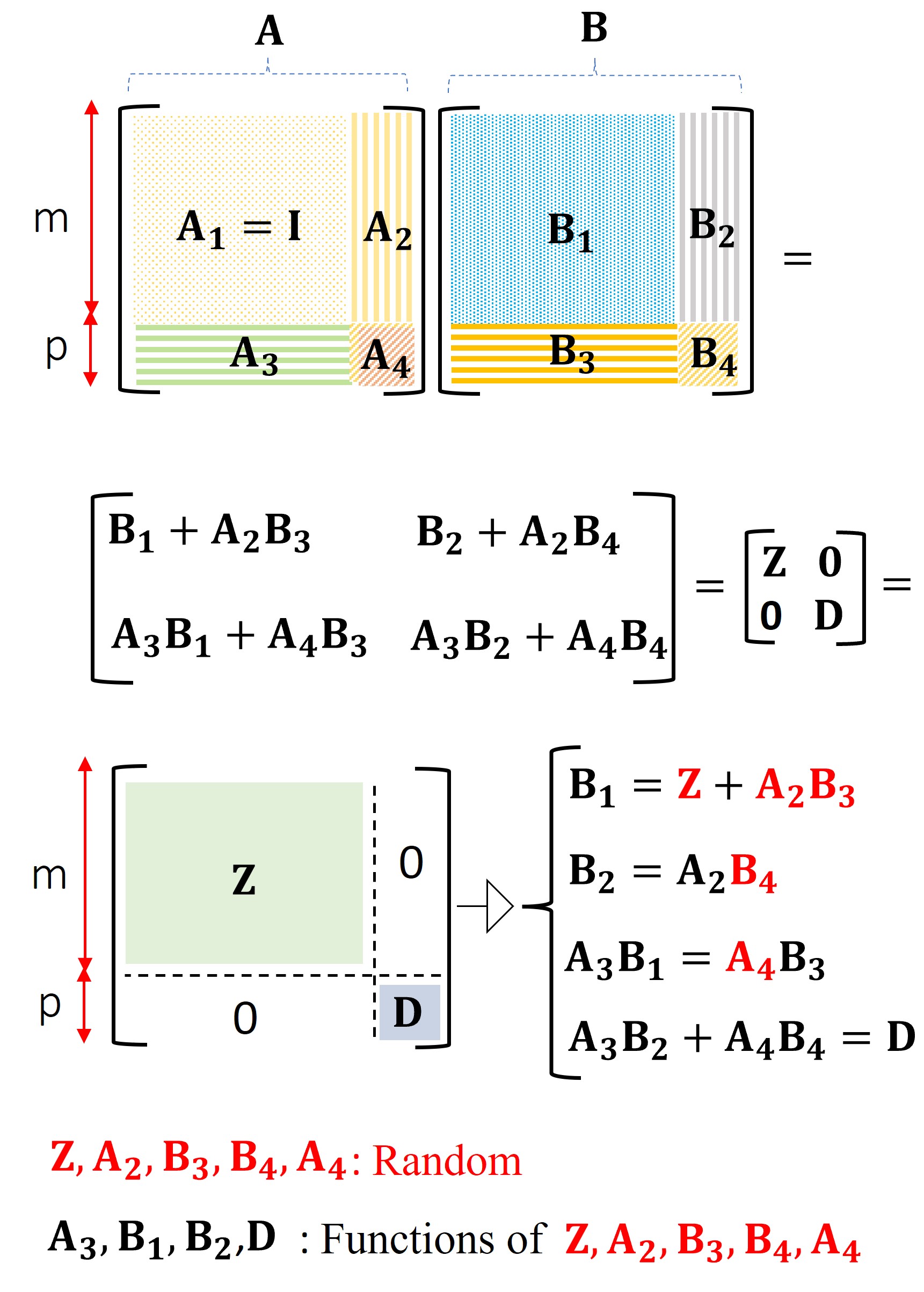}
   \caption{Structure of the (private to Alice) matrix $\mathbf{A}$ and relevant randomness conditions necessary for key recovery by Alice.  }
   \label{RC1}
 \end{figure}

Conditions in Fig.~\ref{RC1} are:
\begin{eqnarray}  \nonumber
\mathbf{AB}  & = &
\begin{bmatrix}
\mathbf{I}\mathbf{B}_1+\mathbf{A}_2\mathbf{B}_3 & 
\mathbf{I}\mathbf{B}_2+\mathbf{A}_2\mathbf{B}_4 \\
\mathbf{A}_3\mathbf{B}_1+\mathbf{A}_4\mathbf{B}_3 & 
\mathbf{A}_3\mathbf{B}_2+\mathbf{A}_4\mathbf{B}_4
\end{bmatrix} \\  \nonumber
&  & \\  \label{RC00}
& = &
\begin{bmatrix}
\mathbf{Z} & \mathbf{0}\\
\mathbf{0} & \mathbf{D}
\end{bmatrix}. 
\end{eqnarray}
where $\mathbf{Z}$ is a permutation matrix. 
We obtain:
\begin{eqnarray} \label{RC000}
\mathbf{B}_1 & = & \mathbf{Z}+\mathbf{A}_2\mathbf{B}_3 \\ \label{RC001}
\mathbf{B}_2 & = & \mathbf{A}_2\mathbf{B}_4 \\ \label{RC002}
\mathbf{A}_3\mathbf{B}_1 & = & \mathbf{A}_4\mathbf{B}_3 \\ \label{RC003}
\mathbf{A}_3\mathbf{B}_2+\mathbf{A}_4\mathbf{B}_4 & = & \mathbf{D}.
\end{eqnarray}
Noting $\mathsf{m}>\mathsf{p}$, from \ref{RC000},  $\mathbf{A}_2$ and $\mathbf{B}_3$ can be selected to be random, creating a partial randomness in  $\mathbf{B}_1$. Then, in \ref{RC001}, for given $\mathbf{A}_2$,  matrix $\mathbf{B}_4$ can be selected  randomly, creating the randomness in  matrix  $\mathbf{B}_2$. Matrix  $\mathbf{A}_4$ in 
\ref{RC002} can be selected randomly  for given 
$\mathbf{B}_1$, $\mathbf{B}_3$, while finding $\mathbf{A}_3$ to satisfy 
\ref{RC002}. Finally, $\mathbf{D}$ in \ref{RC003} is computed as a function of 
$\mathbf{A}_3$, $\mathbf{B}_2$, $\mathbf{A}_4$ and $\mathbf{B}_4$.

\begin{theorem} \label{Th1}
Random permutation embedded in $\mathbf{C}_1$ is completely masked by randomness embedded in $\mathbf{B}$. 

\end{theorem}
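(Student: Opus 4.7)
The plan is to show that the joint distribution of the public key $\mathbf{P}=\mathbf{BC}$ is invariant under the choice of the permutation $\mathbf{Z}$ hidden inside $\mathbf{C}_1$, which gives $I(\mathbf{Z};\mathbf{P})=0$ and hence the ``complete masking'' the theorem asserts. First I would expand $\mathbf{P}=\mathbf{BC}$ in block form consistent with the row/column partitioning of $\mathbf{B}$ in Fig.~\ref{RC1} and of $\mathbf{C}$ in Fig.~\ref{PK1}, so that the contribution of the permutation-bearing submatrix of $\mathbf{C}_1$ is isolated to particular blocks on the product side, namely the blocks whose left factor is $\mathbf{B}_1$ or $\mathbf{B}_2$.

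Next I would substitute the key identities \ref{RC000} and \ref{RC001}, namely $\mathbf{B}_1=\mathbf{Z}+\mathbf{A}_2\mathbf{B}_3$ and $\mathbf{B}_2=\mathbf{A}_2\mathbf{B}_4$, into that expansion. The permutation-dependent rows of $\mathbf{P}$ then decompose as the sum of a $\mathbf{Z}$-dependent term and mask terms of the form $\mathbf{A}_2\mathbf{B}_3\mathbf{C}_1$ and $\mathbf{A}_2\mathbf{B}_4\mathbf{C}_2$. The heart of the argument is a matrix one-time-pad statement: because the construction recipe following \ref{RC003} leaves $\mathbf{A}_2$, $\mathbf{B}_3$, $\mathbf{B}_4$ as free uniform random variables chosen independently of $\mathbf{Z}$, the additive mask $\mathbf{A}_2\mathbf{B}_3$ is uniformly distributed over a support that contains every possible value of $\mathbf{Z}$, so that the shift $\mathbf{Z}\mapsto\mathbf{Z}+\mathbf{A}_2\mathbf{B}_3$ is a measure-preserving bijection of that support for every fixed value of $\mathbf{A}_2$ of full column rank.

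To conclude, for any two candidate permutations $\mathbf{Z}$ and $\mathbf{Z}'$ I would exhibit an explicit measure-preserving bijection on the tuple of random parameters $(\mathbf{A}_2,\mathbf{B}_3,\mathbf{B}_4,\mathbf{A}_4,\mathbf{C}_2,\ldots)$ that leaves the value of $\mathbf{P}$ unchanged but replaces $\mathbf{Z}$ by $\mathbf{Z}'$; the bijection simply absorbs $\mathbf{Z}-\mathbf{Z}'$ into $\mathbf{A}_2\mathbf{B}_3$ and makes a compensating adjustment in $\mathbf{B}_4$ so that \ref{RC001} continues to hold. Averaging over all admissible parameters then shows the conditional law of $\mathbf{P}$ given $\mathbf{Z}$ does not depend on $\mathbf{Z}$, so $\mathbf{P}\perp\!\!\!\perp\mathbf{Z}$.

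The main obstacle I expect is that $\mathbf{C}_1$ is not a pure permutation but a \emph{column-punctured} one, and the puncture pattern itself is part of the private structure; the mask $\mathbf{A}_2\mathbf{B}_3$ must be wide enough to cover all columns where the un-punctured permutation's nonzero could have landed, which is why the dimensions of $\mathbf{B}_3$ are what they are. I would need to verify carefully that for every valid (permutation, puncture)-pair the induced mask is uniform over the corresponding block, and that the appended random block $\mathbf{C}_2$ absorbs any residual row-index information that could otherwise leak through the $\mathbf{B}_2\mathbf{C}_2$ term. Once these dimension-counting checks are done, the one-time-pad bijection above extends verbatim to the joint secret of permutation plus puncture pattern.
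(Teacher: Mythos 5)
Your overall target --- showing the conditional law of the public key is invariant under the hidden permutation, hence zero mutual information --- is in the spirit of the paper, and you correctly isolate the relevant blocks $\mathbf{B}_1\mathbf{C}_1+\mathbf{B}_2\mathbf{C}_2$ and the identities \ref{RC000} and \ref{RC001}. But the central mechanism you propose, the additive one-time-pad step, does not work. The mask $\mathbf{A}_2\mathbf{B}_3$ is a product of an $\mathsf{m}\times\mathsf{p}$ matrix and a $\mathsf{p}\times\mathsf{m}$ matrix with $\mathsf{p}\ll\mathsf{m}$ (e.g., $\mathsf{p}=27$ against $\mathsf{m}=848$), so it has rank at most $\mathsf{p}$ and its support is confined to matrices whose columns lie in the $\mathsf{p}$-dimensional column space of $\mathbf{A}_2$. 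That support does not ``contain every possible value of $\mathbf{Z}$'': a permutation matrix has rank $\mathsf{m}$, and the difference $\mathbf{Z}-\mathbf{Z}'$ of two permutations generically has rank far exceeding $\mathsf{p}$, so it cannot be absorbed as $\mathbf{A}_2(\mathbf{B}_3-\mathbf{B}_3')$ for any choice of $\mathbf{B}_3'$. The shift $\mathbf{Z}\mapsto\mathbf{Z}+\mathbf{A}_2\mathbf{B}_3$ is therefore not a bijection of any common support, and the claimed measure-preserving re-parameterization sending $\mathbf{Z}$ to $\mathbf{Z}'$ while fixing $\mathbf{P}$ does not exist in the form you describe. (You have also conflated two distinct objects: the paper's $\mathbf{Z}$ is the permutation block of $\mathbf{A}\mathbf{B}$ in \ref{RC00}, whereas the theorem concerns the punctured $\mathsf{s}\times\mathsf{s}$ permutation from which $\mathbf{C}_1$ is built.)

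The paper sidesteps exactly this obstruction by using a \emph{multiplicative} rather than additive symmetry: for any row permutation $\mathbf{Y}$ of $\mathbf{C}_1$, write $\mathbf{P}=\mathbf{B}\mathbf{X}\cdot\mathbf{X}^{-1}\mathbf{C}$ with $\mathbf{X}=\mathrm{diag}(\mathbf{Y},\mathbf{I})$, which replaces $(\mathbf{Z},\mathbf{B}_3)$ by $(\mathbf{Z}\mathbf{Y},\mathbf{B}_3\mathbf{Y})$ as in \ref{Eqp00}. Right multiplication by $\mathbf{Y}$ preserves both the set of permutation matrices and the uniform distribution on random $\mathbf{B}_3$, so the constraint structure \ref{RC000}--\ref{RC003} is preserved and the two realizations are equally likely --- the group action succeeds where the additive shift fails. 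Note also that this symmetry argument only establishes indistinguishability within the orbit of $\mathbf{C}_1$ under row permutations; the paper handles the residual information (in particular the puncture pattern, which you rightly flag as a difficulty) by a separate entropy-counting argument showing the information content of the $\mathbf{B}_2\mathbf{C}_2$ mask exceeds the target $\mathsf{SEC}$ level. To repair your write-up you would need to replace the one-time-pad step by this orbit argument, or else prove a uniformity statement for a mask that genuinely spans the space of permutation differences, which the rank-$\mathsf{p}$ term $\mathbf{A}_2\mathbf{B}_3$ cannot provide.
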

\begin{proof}
Given a realization of matrix $\mathbf{C}_1$, let us consider the matrix 
$\mathbf{X}\mathbf{C}_1$ where $\mathbf{X}$ is a block diagonal matrix composed of $\mathsf{m}\times \mathsf{m}$ random permutation matrix $\mathbf{Y}$ in its upper left corner, and $\mathsf{p}\times \mathsf{p}$ identity matrix in its lower right corner. 
Given  $\mathbf{P}=\mathbf{BC}$, let 
 us consider a different realization of  $\mathbf{P}$ as $\mathbf{P}=\mathbf{BXC}$. It follows that the permutation in rows of $\mathbf{C}_1$ due to $\mathbf{Y}$ can be absorbed in matrix $\mathbf{B}$, by permuting its columns  as follows 
\begin{equation}
\begin{bmatrix}
\mathbf{B}_1 \\
\mathbf{B}_3
\end{bmatrix}~~\rightarrow~~
\begin{bmatrix}
\mathbf{B}_1 \\
\mathbf{B}_3
\end{bmatrix} \mathbf{Y}= 
\begin{bmatrix}
\mathbf{B}_1\mathbf{Y} \\
\mathbf{B}_3 \mathbf{Y}
\end{bmatrix}.
\label{Eqp0}
\end{equation}
Noting $\mathbf{B}_3$ is random, $\mathbf{B}_3 \mathbf{Y}$ will be another random realization of the underlying matrix that could occur  with the same probability.  This means, in information theoretical sense, the effect of $ \mathbf{Y}$  in
$\mathbf{B}_3 \mathbf{Y}$ cannot be distinguished by Eve. 
In other words, for $\mathbf{B}_1 \mathbf{Y}$, replacing in~\ref{RC000}, it follows 
\begin{equation}
\mathbf{B}_1\mathbf{Y}  =  \mathbf{ZY}+\mathbf{A}_2\mathbf{B}_3 \mathbf{Y}=
\hat{\mathbf{Z}}+\mathbf{A}_2\hat{\mathbf{B}}_3,
\label{Eqp00}  
\end{equation}
where $\hat{\mathbf{Z}}$ and $\hat{\mathbf{B}}_3$ are realizations that could occur with the same probability as $\mathbf{Z}$ and $\mathbf{B}_3$.

A similar conclusion  could be reached relying on a different perspective. Let us consider 
matrix $\mathbf{B}_1\mathbf{C}_1+\mathbf{B}_2\mathbf{C}_2$ (as part of the public matrix in Fig.~\ref{PK1}) and study the maximum information it could provide about $\mathbf{C}_1$. Without loss of generality, let us act to the benefit of Eve by assuming $\mathbf{B}_1$ is available to Eve as side information. In this case, $\mathbf{B}_2\mathbf{C}_2$ acts as a mask hiding $\mathbf{C}_1$. 
For values of parameters $\mathsf{m}$, $\mathsf{p}$, $\mathsf{s}$ considered here, the information content of $\mathbf{B}_2\mathbf{C}_2$ (computed in \cite{Arxiv}) is much higher than the target $\mathsf{SEC}$ level, entailing  $\mathbf{C}_1$ is well hidden.
\end{proof}
\begin{theorem} \label{Th2}
Matrix $\mathbf{B}_3\mathbf{C}_1+\mathbf{B}_4\mathbf{C}_2$ does not provide any information about matrix $\mathbf{C}_1$.
\end{theorem}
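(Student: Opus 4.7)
The strategy is to mirror the one-time-pad masking argument used in the second half of the proof of Theorem~\ref{Th1}, but exploit the stronger status of $\mathbf{B}_3$ and $\mathbf{B}_4$ to get a stronger conclusion (no information at all, as opposed to just hiding the permutation). I would proceed in three steps.

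First, I would record from the construction following equations~\ref{RC000}--\ref{RC003} that both $\mathbf{B}_3$ and $\mathbf{B}_4$ are free random matrices carrying no algebraic dependence on $\mathbf{C}_1$ or $\mathbf{C}_2$: $\mathbf{B}_3$ is drawn independently in~\ref{RC000} (with $\mathbf{A}_3$ later absorbing the induced constraint via~\ref{RC002}), and $\mathbf{B}_4$ is drawn independently in~\ref{RC001} (with $\mathbf{B}_2$ then defined from it and $\mathbf{A}_2$). In particular, $\mathbf{B}_4$ is independent of $(\mathbf{C}_1,\mathbf{C}_2,\mathbf{B}_3)$.

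Second, I would argue that $\mathbf{B}_4\mathbf{C}_2$ acts as a perfect mask. Because $\mathbf{C}_2$ is a uniformly random matrix, conditioned on any realization of $\mathbf{B}_4$ with full row rank the product $\mathbf{B}_4\mathbf{C}_2$ is uniformly distributed over matrices of its dimensions and statistically independent of $(\mathbf{B}_3,\mathbf{C}_1)$. Rank-deficient realizations of $\mathbf{B}_4$ occur with probability that is negligible relative to the target $\mathsf{SEC}$ and can be absorbed in the same bound used for Theorem~\ref{Th1}. Then, by the standard one-time-pad principle, adding such a uniform independent mask to $\mathbf{B}_3\mathbf{C}_1$ yields a matrix whose distribution does not depend on $\mathbf{C}_1$, so the posterior $\Pr[\mathbf{C}_1 \mid \mathbf{B}_3\mathbf{C}_1+\mathbf{B}_4\mathbf{C}_2]$ coincides with the prior.

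As an alternative, symmetric perspective that I would present in parallel, I would adapt the permutation absorption argument of Theorem~\ref{Th1}: replacing $\mathbf{C}_1$ by $\mathbf{Y}\mathbf{C}_1$ for any permutation $\mathbf{Y}$ can be compensated by the relabelling $\mathbf{B}_3\mapsto \mathbf{B}_3\mathbf{Y}^{-1}$, which, by uniformity of $\mathbf{B}_3$, leaves the distribution of $(\mathbf{B}_3\mathbf{C}_1+\mathbf{B}_4\mathbf{C}_2)$ invariant; combined with the mask argument above, this rules out leakage of both the permutation structure and the residual entries.

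The main obstacle I anticipate is the rank condition in Step~2: showing that $\mathbf{B}_4\mathbf{C}_2$ is \emph{exactly} uniform (not merely high-entropy) requires that the dimensions $\mathsf{m},\mathsf{p},\mathsf{q}$ make it impossible for $\mathbf{B}_4$ to be deterministically biased by its dual role in~\ref{RC001} and~\ref{RC003}. The cleanest path is to invoke the construction order ($\mathbf{B}_4$ chosen freely, $\mathbf{D}$ computed afterwards), so that $\mathbf{B}_4$ may be treated as a uniform independent random matrix with the residual deviation controlled by the same $\mathcal{O}(2^{-\mathsf{SEC}})$ bookkeeping that governs the security level of the rest of the scheme.
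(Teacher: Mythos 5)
Your proposal takes essentially the same route as the paper: the paper's proof likewise grants Eve $\mathbf{B}_3$ as side information and observes that $\mathbf{B}_4\mathbf{C}_2$, being a random matrix independent of $\mathbf{B}_3\mathbf{C}_1$, acts as a one-time-pad mask, with $\mathsf{p}$ chosen so that the information content of $\mathbf{B}_4$ exceeds $\mathsf{SEC}$. One caveat on the extra rigor you add in Step~2: a uniformly random binary $\mathsf{p}\times\mathsf{p}$ matrix is singular with constant probability (roughly $0.71$, since the invertibility probability tends to $\prod_{k\ge 1}(1-2^{-k})\approx 0.29$), not with probability $\mathcal{O}(2^{-\mathsf{SEC}})$, so the full-rank condition should be enforced by construction (e.g., rejection sampling of $\mathbf{B}_4$) rather than dismissed as negligible.
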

\begin{proof}
Assume $\mathbf{B}_3$ is available to Eve as side information. Noting $\mathbf{B}_4$ is a  random $\mathsf{p}\times \mathsf{p}$ matrix independent of $\mathbf{C}_2$, it follows that $\mathbf{B}_4\mathbf{C}_2$ will be a random $\mathsf{p}\times \mathsf{p}$ matrix,  completely hiding $\mathbf{B}_3\mathbf{C}_1$. Value of $\mathsf{p}$ is selected such that the information content of $\mathbf{B}_4$ is higher than the target $\mathsf{SEC}$ level. 
\end{proof} 
\subsection{Encryption and Decryption}
{\bf Encryption:} Bob selects a binary vector 
$\mathbf{c}=[\mathbf{f}_1,  \ldots, \mathbf{f}_{\mathsf{r}}]^t$ where 
$\mathbf{f}_i$, $i=1,\ldots,\mathsf{r}$, called {\em component codes}, hereafter, are code-words from a short forward error correcting code of length $\ell=\mathsf{s}/\mathsf{r}$ composed of $\mathsf{f}$ code-words. 
A key $\mathbf{k}$ of size $\mathsf{k}=\lfloor \mathsf{r}\log_2(\mathsf{f})\rfloor$ bits is encapsulated in $\mathbf{c}_{\mathbf{k}}$. Code-words of the $i$'th component code are mapped to integers $0,\ldots, \mathsf{f}$ relying on a random assignment for each component code. As a result, Eve will not be able to rely on a generator matrix to map the labels to the code-words of the components codes. Relying on a generator matrix would provide an isomorphism between labels and code-words, which would enable Eve to benefit from information set decoding attack applied to a smaller set of labeling bits. This could be realized by forming 
$\mathbf{PG}$ where $\mathbf{G}$ is a block diagonal matrix generating concatenation of component codes. Each non-isomorphic labeling can be summarized by a different ordering of integers $0,\ldots, \mathsf{f}-1$ which is selected by Alice and publicly shared with Bob. 
Then, the public message $\mathbf{m}_{\mathbf{k}}$ 
is formed as
\begin{equation} 
\mathbf{m}_{\mathbf{k}} = \mathbf{P}\left(\mathbf{c}_{\mathbf{k}}+\mathbf{e}_1+
\mathbf{r}_1\right)+\mathbf{e}_2+\mathbf{r}_2
\label{Eqp110}
\end{equation}
where $\mathbf{e}_1$, $\mathbf{e}_2$ are added by Bob as error vectors 
and $\mathbf{r}_1$, $\mathbf{r}_2$ are bits from common randomness at the Bob's side. Positions of common random bits in $\mathbf{r}_1$ 
and $\mathbf{r}_2$ are publicly known, and 
positions of ones in $\mathbf{e}_1$ and $\mathbf{e}_2$ are randomly selected by Bob.    
Number of bits forming $\mathbf{r}_1$ 
and $\mathbf{r}_2$ depend on the total number of bits
extracted from common randomness. Number of ones in $\mathbf{e}_1$ and $\mathbf{e}_2$ are selected such that the overall error vector added to each component code is within its error correction capability.
 Denoting the number of ones in $\mathbf{e}_1$, $\mathbf{e}_2$ as $w_{\mathbf{e}_1}$, $w_{\mathbf{e}_2}$, respectively, the total number of random errors added to each component code will be equal to 
$w_{\mathbf{e}_1}+w_{\mathbf{e}_2}$.  Note that the multiplication 
by $\mathbf{A}$ at Alice's side will keep bits corresponding to each component code within its original boundary.   Typically, $w_{\mathbf{e}_1}+w_{\mathbf{e}_2}$ is fixed while values of $w_{\mathbf{e}_1}$ and 
$w_{\mathbf{e}_2}$ are (randomly)  selected by Bob (for each component code). Note that, in~\ref{Eqp110}, by adding vectors of errors and common randomness to both  $\mathbf{c}_{\mathbf{k}}$ and $\mathbf{P}\left(\mathbf{c}_{\mathbf{k}}+\mathbf{e}_1+\mathbf{r}_1\right)$, their effects propagates in vectors that Eve can observe by working directly on \ref{Eqp110}, or by aiming to invert $\mathbf{P}$ in ~\ref{Eqp110}. Noting randomness in $\mathbf{P}$, this makes the information set decoding attack more complex (due to error propagation) as compare to an exhaustive search attack.  

\noindent 
{\bf Decryption:} 
Alice, having access to the private key $\mathbf{A}$, can form
\begin{equation}
\mathbf{A} \mathbf{P}=
 \mathbf{A} \mathbf{B}\mathbf{C}=
\begin{bmatrix}
 \mathbf{Z}  & \mathbf{0} \\
\mathbf{0} & \mathbf{D}
\end{bmatrix}
\begin{bmatrix}
    \mathbf{C}_1    \\
\mathbf{C}_2
\end{bmatrix}=
\begin{bmatrix}
    \mathbf{Z} \mathbf{C}_1    \\
\mathbf{D}\mathbf{C}_2
\end{bmatrix}.
\end{equation}
Upon receiving $\mathbf{m}_{\mathbf{k}}$ from Bob, Alice computes $\mathbf{A}\mathbf{m}_{\mathbf{k}}$ and discards the last $\mathsf{p}$ bits (corresponding to $\mathbf{D}\mathbf{C}_2$) from the result. 
Noting the structure of $\mathbf{A}$ in Fig.~\ref{RC1}, the first 
$\mathsf{m}$ bits in the product $\mathbf{A}\mathbf{m}_{\mathbf{k}}$ are obtained from the first  $\mathsf{m}$ bits in 
$\mathbf{c}_{\mathbf{k}}+\mathbf{e}_1+\mathbf{r}_1$. These bits are permuted according to block permutation matrix $\mathbf{Z}\mathbf{C}_1$ (which keeps the bits corresponding to each component code within its original boundary), and then are added to $\mathbf{e}_2+\mathbf{r}_2$. The final binary vector satisfies:   
(i) Bits corresponding to each component code remain within 
its respective boundaries.  (ii) Alice is aware of the positions of punctured bits as well as the positions of common random bits. (iii) Number of added errors in each component code is within its respective error correction capability. As a result, even in the absence of common random bits, Alice will be able to correct all errors and recover the key. 

\noindent {\bf Attack:} Noting above, the only option for attack is to exhaustively examine all binary vectors formed by concatenation of component codes, multiply each resulting vector by $\mathbf{P}$, compare each outcome with  $\mathbf{m}_{\mathbf{k}}$, count the number of errors and decide if the encapsulated key is found. Number of components codes, $\mathsf{r}$,  and their respective number of code-words,
$\mathsf{f}$, are selected such that the resulting set is of size 
$2^\mathsf{SEC}$.

\noindent 
{\bf Reed Muller Component Codes:}
 In this section, code-words of component codes are selected from Reed Muller codes $(16,5,8)$ or $(32,6,16)$ upon discarding all-zeros and all-ones code-words. This results in $30$ and $62$ code-words, respectively. 
Matrix $\mathbf{C}_1$ is constructed such that a single element, in a random position, is punctured from each component code. The minimum distance for the resulting punctured codes is equal to $7$ and $15$, capable of correcting $3$ and $7$ bit errors, respectively. Bob selects $w_1$ and $w_2$ such that $w_1+w_2= 3$ and $7$, respectively. 
To encapsulate a key of size 256 bits, numbers of component codes are selected as
\begin{eqnarray} \label{EQP1}
\lceil 256/\log_2(30)\rceil & = & 53 \\ \label{EQP2}
\lceil 256/\log_2(62)\rceil& = & 43
\end{eqnarray}
for $(16,5,8)$ and $(32,6,16)$, resulting in message size of 
$\mathsf{m}=53\times 16=848$ and $\mathsf{m}=43\times 16=1376$, respectively. Value of  $\mathsf{p}$ and $\mathsf{q}$ for $(16,5,8)$, $(32,6,16)$ are selected as 
$\mathsf{p}=27,22$ and $\mathsf{q}=26,21$, respectively. The resulting 
$(\mathsf{m+p})\times (\mathsf{m+p+q})$ public key matrices are of sizes 
\begin{eqnarray} \label{EQQ1}
(848+27)\times (848+53) & \approx & 0.8 \\  \label{EQQ2} 
(1376+22)\times (1376+43) & \approx &  2
\end{eqnarray}
mega-bits, respectively. For Eve to locate the erroneous positions, the $\log_2$ of the number of positions to be exhaustively searched will be equal to
\begin{eqnarray} \label{EQR1}
53\times \log_2{15\choose 3} & \approx & 468 \\  \label{EQR2}
43\times \log_2{31\choose 7}& \approx & 917
\end{eqnarray}
respectively, which are higher than the target security level of $256$ bits. 
Matrix $\mathbf{C}_1$ is selected such that a single random position  within each component code is punctured, and bits forming each component code are permuted among themselves.  
For $(16,5,8)$, $(32,6,16)$ RM components codes, upon discarding all-zeros and all-ones code-words, remaining code-words each include an equal number of zeros and ones.  Consequently, the number of possibilities for puncturing a single bit and permuting the remaining  $15,31$ positions are equal to $\frac{16\times 15!}{8!\times 7!}$ and $\frac{32\times 31!}{16!\times 15!}$, resulting in security levels of 
\begin{eqnarray}\label{EQP3}
 \frac{16\times 15!}{8!\times 7!} \rightarrow  
\log_2\left(\frac{16!}{8!\times 7!}\right)=16.65 \\  \label{EQP4}
\frac{32\times 31!}{16!\times 15!} \rightarrow  
\log_2\left(\frac{32!}{16!\times 15!}\right)=33.16
\end{eqnarray}
bits, respectively. Multiplying \ref{EQP1} by  \ref{EQP3} and \ref{EQP2} by  \ref{EQP4}, it is concluded that the resulting total security levels
are significantly higher than the target security level of $256$ bits.

\subsection{Complexity Comparisons} \label{compAK}
Complexity is compared to that of McEliece/Niederreiter  cryptosystem \cite{R0} using an $(n,k)$ Goppa code (extracted from a proposal submitted to National Institute of Standards and Technology \cite{Main-ref}).
Complexity aspects include: (i) Storage requirement for storing the public key. (ii) Computational complexity of key encapsulation and recovery. The main computational complexity in McEliece/Niederreiter 
cryptosystems concerns decoding of the underlying code, while decoding of the components codes in our case is fairly simple.  For this reason, our comparisons do not include the decoding complexity. This omission results in disadvantages for the proposed method.  
Table~\ref{TabN1} provides size of the public key for a security level of 256.  
  
\begin{table}
\begin{small}
\begin{center}
\begin{tabular}{|c|c|c|c|} \hline
McEliece & $n$ & $k$  &  Memory  \\ \cline{2-4}
Cryptosystem & 6624 & 5129  & 7.6 Mbits \\ \hline \hline
Proposed Method  & 16 & 5 & 0.8  Mbits\\ \hline
Proposed Method  & 32 & 6 & 2.0  Mbits \\ \hline
\end{tabular} 
\caption{Key sizes for $\mathsf{SEC}=256$ bits. 
For McEliece, memory is based on  a systematic generator occupying $k(n-k)$ bits, and $\mathsf{SEC}$ is based on information set decoding attack~\cite{Main-ref}.}
\label{TabN1}
\end{center}
\end{small}
\end{table}

\subsection{Example for Key Consolidation} 

Reference~\cite{AK1} presents a new method for establishing common randomness between a Node $\mathsf{A}$ and a Node $\mathsf{B}$ over the Internet. Node 
$\mathsf{A}$ sends a sequence of $N$ User Datagram Protocol (UDP) packets at regular time intervals (typically with 10\,msec time gap) to Node $\mathsf{B}$. Node $\mathsf{B}$ sends the received packets, one by one, back to Node $\mathsf{A}$  and Node $\mathsf{A}$  sends them back to Node $\mathsf{B}$. Figure~\ref{fig2} shows an example for $L=2$ loops. Due to looping, these round-trip times, although random, will be close to each other (will have $2L-2$ common travel times).  Then, Node $\mathsf{A}$ and Node $\mathsf{B}$ separately measure their corresponding total round trip times for each packet, and assign a zero/one to each packet if the corresponding round trip time is smaller/larger than the mean of travel times measured at the respective node. The extracted bits are used as common randomness.   Reference~\cite{AK1} presents a method for Bob to extract soft information about  the bits at its end, which are then utilized to correct errors between its local copy of common random bits and that of Alice. Using the method of current article, Alice embeds its version of common random bits in a public message sent to Bob. Figure~\ref{fig3} shows examples  of the achieved performance, where all bits in each RM code are added to bits extracted from common randomness. The two nodes are deployed on Microsoft Azure between North America and Europe.
  \begin{figure}[htbp]
   \centering
   \includegraphics[width=0.4\textwidth]{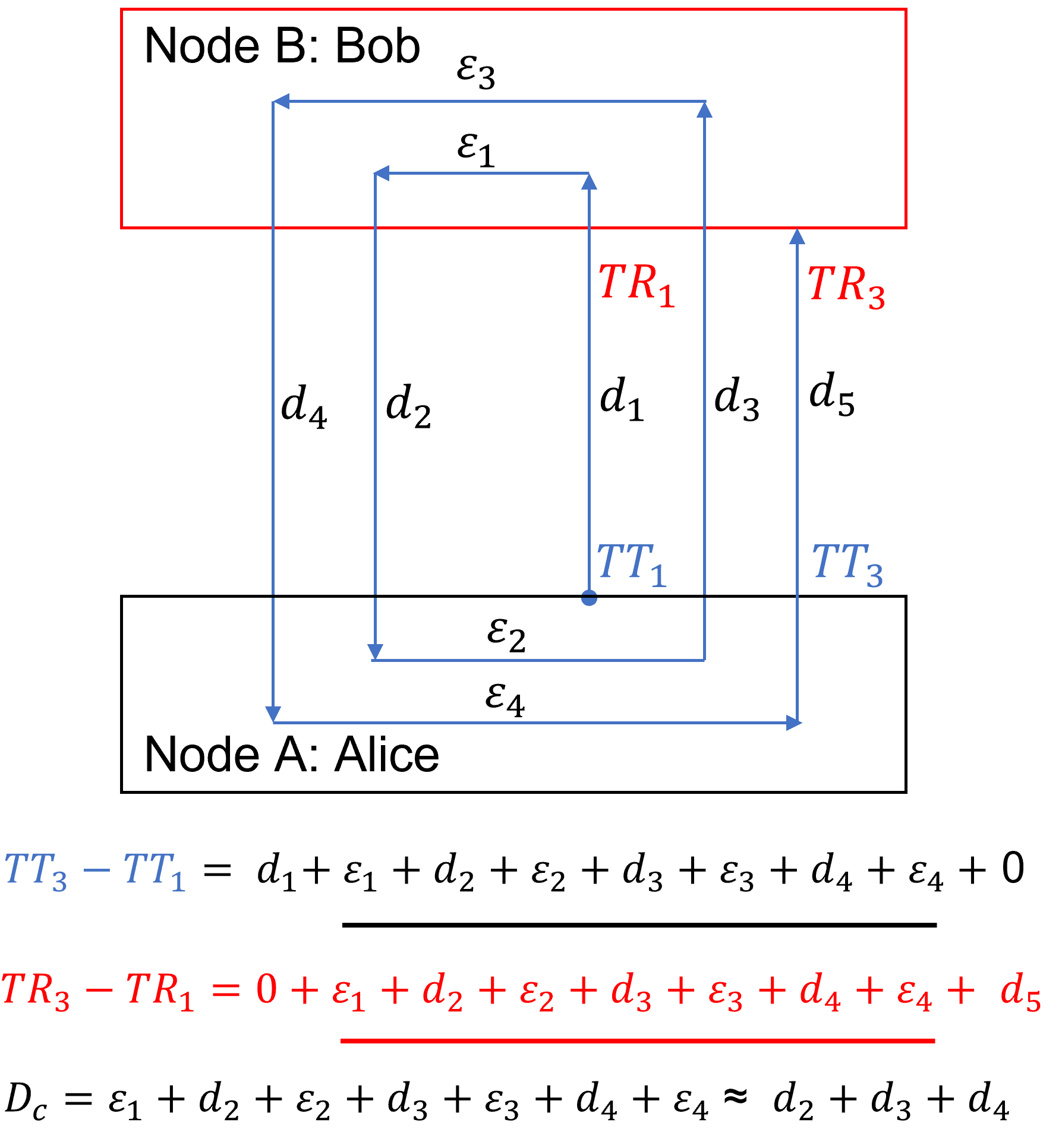}
   \caption{
Round trip time, namely $TT_3-TT_1$ and $T\!R_3-T\!R_1$,  are dependent random variables (since  $\epsilon_1+d_2+\epsilon_2+d_3+\epsilon_3+d_4+\epsilon_4$ is in common). }
 \label{fig2} 
\end{figure} 
  \begin{figure}[htbp]
   \centering
  \hspace*{-0.25cm} \includegraphics[width=0.53\textwidth]{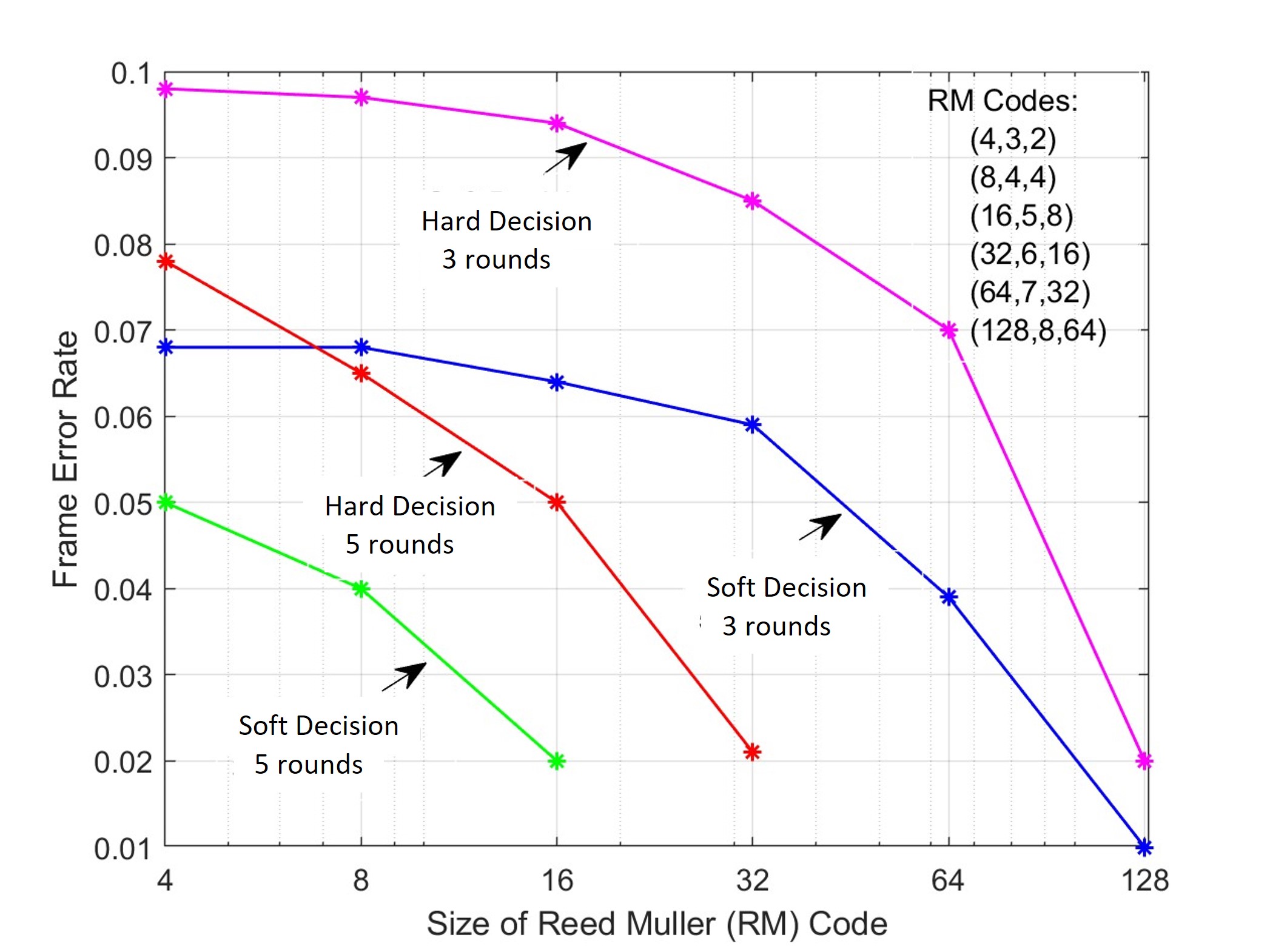}
   \caption{Error Rate of a single code-word for different RM codes.}
 \label{fig3} 
\end{figure}


\newpage 



\begin{thebibliography}{100}

\vspace{-0.12cm}
\bibitem{AK1} 
A. K. Khandani, ``Looping for Encryption Key Generation Over the Internet: A New Frontier in Physical Layer Security,'' {\em 2023 Biennial Symposium on Communications (BSC)}, Montreal, QC, Canada, 2023, pp. 59-64


\vspace{-0.12cm}
\bibitem{common}
R. Ahlswede and I. Csiszar, ``The Role Of Common Randomness In Information Theory And Cryptography, Part 1: Secrecy Constraints," {\em1991 IEEE International Symposium on Information Theory}, Budapest, Hungary, 1991, pp. 265-265
%

\vspace{-0.12cm}
\bibitem{B1} G. Brassard, L. Salvail ``Secret-key reconciliation by public discussion". {\em Workshop on the Theory and Application of Cryptographic Techniques,} Springer. pp. 410–423, 1993. 

\vspace{-0.12cm}
\bibitem{B2}
Kim-Chi Nguyen, Gilles Van Assche, Nicolas J. Cerf,
``Side-Information Coding with Turbo Codes and its Application to Quantum Key Distribution,'' arXiv:cs/0406001, 2004

\vspace{-0.12cm}
\bibitem{B3}
Sungsik Yoon and Jun Heo, ``Efficient information reconciliation with turbo codes over the quantum channel," {\em 2013 International Conference on ICT Convergence,}  Jeju, South Korea, 2013, pp. 1091-1092

\vspace{-0.12cm}
\bibitem{B4}
K. Kasai, R. Matsumoto and K. Sakaniwa, ``Information reconciliation for QKD with rate-compatible non-binary LDPC codes," {\em 2010 International Symposium On Information Theory \& Its Applications}, Taichung, Taiwan, 2010, pp. 922-927

\vspace{-0.12cm}
\bibitem{B5}
J. Martínez-Mateo, D. Elkouss and V. Martín, ``Interactive reconciliation with low-density parity-check codes," {\em 2010 6th International Symposium on Turbo Codes \& Iterative Information Processing}, Brest, France, 2010, pp. 270-274

\vspace{-0.12cm}
\bibitem{B6}
R. Müller, D. Bacco, L. K. Oxenløwe and S. Forchhammer, ``Information Reconciliation for High-Dimensional Quantum Key Distribution using Nonbinary LDPC codes," {\em 2023 12th International Symposium on Topics in Coding}, Brest, France, 2023, pp. 1-5

\vspace{-0.12cm}
\bibitem{B7}
M. Zhu, K. Cui, S. Li, L. Kong, S. Tang and J. Sun, ``A Code Rate-Compatible High-Throughput Hardware Implementation Scheme for QKD Information Reconciliation," {\em Journal of Lightwave Technology}, vol. 40, no. 12, pp. 3786-3793, 15 June15, 2022


\vspace{-0.12cm}
\bibitem{Arxiv}  Amir K. Khandani, ``Quantum-safe Encryption: A New Method to Reduce Complexity and/or
  Improve Security Level,"  arXiv:2401.16302


\vspace{-0.12cm}
\bibitem{R0} McEliece, Robert J. ``A public-key cryptosystem based on algebraic." Coding Theory, 4244 (1978): 114-116.

\vspace{-0.12cm}
\bibitem{Main-ref} Bernstein, D.J., Lange, T., Peters, C. ``Attacking and Defending the McEliece Cryptosystem," {\em Buchmann, J., Ding, J. (eds) Post-Quantum Cryptography. PQCrypto 2008. Lecture Notes in Computer Science, vol 5299. Springer, Berlin, Heidelberg, 2008, pp. 31---46.}




%




%

%


%
%
%
%
%
%
%
%
%
%
%
%

\end{thebibliography}
\end{document}